\pgfplotsset{compat=newest}
\def\thmhead@plain#1#2#3{%
  \thmname{#1}\thmnumber{\@ifnotempty{#1}{ }\@upn{#2}}%
  \thmnote{ {\the\thm@notefont#3}}}
\let\thmhead\thmhead@plain
\newtheorem{theorem}{Theorem}
\newtheorem{lemma}{Lemma}
\newtheorem{example}{Example}
\def\mkfancyprefix#1#2{%
\expandafter\def\csname fancyref#1labelprefix\endcsname{#1}%
\begingroup\def\x{\endgroup\frefformat{plain}}%
    \expandafter\x\csname fancyref#1labelprefix\endcsname
    {\MakeLowercase{#2}\fancyrefdefaultspacing##1}%
\begingroup\def\x{\endgroup\Frefformat{plain}}%
    \expandafter\x\csname fancyref#1labelprefix\endcsname
    {#2\fancyrefdefaultspacing##1}%
\begingroup\def\x{\endgroup\frefformat{vario}}%
    \expandafter\x\csname fancyref#1labelprefix\endcsname
    {\MakeLowercase{#2}\fancyrefdefaultspacing##1##3}%
\begingroup\def\x{\endgroup\Frefformat{vario}}%
    \expandafter\x\csname fancyref#1labelprefix\endcsname
    {#2\fancyrefdefaultspacing##1##3}%
}
\fancyrefchangeprefix{\fancyrefeqlabelprefix}{eqn}
\newcommand{\cref}[1]{\Fref{#1}}
\newcommand{\removelatexerror}{\let\@latex@error\@gobble}
\DeclareMathOperator{\rk}{rank}
\DeclareMathOperator{\dR}{\ensuremath{d_R}}
\def\argmin{\mathop{\mathrm{argmin}}}
\def\ve#1{{\mathchoice{\mbox{\boldmath$\displaystyle #1$}}%
              {\mbox{\boldmath$\textstyle #1$}}%
              {\mbox{\boldmath$\scriptstyle #1$}}%
              {\mbox{\boldmath$\scriptscriptstyle #1$}}}}
\newcommand{\Fqm}{\ensuremath{\mathbb F_{q^m}}}
\newcommand{\Fq}{\ensuremath{\mathbb F_{q}}}
\newcommand{\F}{\ensuremath{\mathbb F}}
\newcommand{\ZZ}{\ensuremath{\mathbb{Z}}}
\newcommand{\NN}{\ensuremath{\mathbb{N}}}
\renewcommand{\c}{\ve c}
\newcommand{\A}{\ve{A}}
\newcommand{\B}{\ve{B}}
\newcommand{\E}{\ve{E}}
\renewcommand{\H}{\ve{H}}
\newcommand{\X}{\ve{X}}
\newcommand{\Y}{\ve{Y}}
\renewcommand{\bar}{\overline}
\newcommand{\w}{\omega}
\newcommand{\imag}{\mathrm{j}}
\newcommand{\eexp}[1]{\mathrm{e}^{#1}}
\newcommand{\CC}{\mathbb{C}}
\newcommand{\KK}{\mathbb{K}}
\newcommand{\EI}{\mathbb{E}}
\newcommand{\GI}{\mathbb{G}}
\newcommand{\EIunits}{\EI^{\times}}
\newcommand{\N}{\ve N}
\newcommand{\Z}{\ve Z}
\newcommand{\Ccode}{\mathcal{C}}
\newcommand{\CcodeE}{\mathcal{C}_\mathbb{E}}
\newcommand{\CGab}{\mathcal{C}_\mathrm{G}}
\newcommand{\CST}{\mathcal{C}_\mathrm{ST}}
\newcommand{\C}{\ve C}
\newcommand{\Norm}{\mathcal{N}}
\newcommand{\sigmaN}{\sigma_n^2}
\newcommand{\AR}{\ve A_\mathrm{R}}
\newcommand{\BR}{\ve B_\mathrm{R}}
\newcommand{\AC}{\ve A_\mathrm{C}}
\newcommand{\BC}{\ve B_\mathrm{C}}
\newcommand{\Eps}{\mathcal{E}}
\newcommand{\ZFell}{[\Z_\F^{}]_{\Eps_\ell}}
\newcommand{\EFell}{[\E_\F'']_{\Eps_\ell}}
\newcommand{\Fred}{\ve F}
\newcommand{\f}{\ve f}
\newcommand{\Voronoi}[1]{\mathcal{R}_\mathrm{V}(#1)}
\newcommand{\Eprime}{\Theta}
\newcommand{\Equant}[2]{\mathcal{Q}_{#1}(#2)}
\newcommand{\Emod}[2]{\mathrm{mod}_{#1}(#2)}
\newcommand{\EC}[1]{\mathcal{E}_{#1}}
\newcommand{\Eadd}{\oplus}
\newcommand{\Emul}{\otimes}
\newcommand{\ext}{\mathrm{ext}}
\newcommand{\Lset}{\mathcal{L}_{q^m}}
\newcommand{\BigO}{\mathcal{O}}
\newcommand{\Otilde}{\BigO^{\sim}}
\newcommand{\Cone}{\C^{(1)}}
\newcommand{\Ctwo}{\C^{(2)}}
\newcommand{\Ctone}{\tilde{\C}^{(1)}}
\newcommand{\Cttwo}{\tilde{\C}^{(2)}}
\newcommand{\cone}{\c^{(1)}}
\newcommand{\ctwo}{\c^{(2)}}
\newcommand{\ct}{\tilde{\c}}
\newcommand{\zero}{\ve 0}
\newcommand{\Ntx}{{N_{\mathrm{tx}}}}
\newcommand{\Nrx}{{N_{\mathrm{rx}}}}
\newcommand{\Ntime}{{N_{\mathrm{time}}}}
\definecolor{darkgreen}{rgb}{0,0,1} 
\newcommand{\labelcolor}[1]{\textcolor{darkgreen}{#1}}
\begin{document}

\title{Space-Time Codes Based on Rank-Metric Codes and Their Decoding}

\author{\IEEEauthorblockN{Sven Puchinger, Sebastian Stern, Martin Bossert, Robert F.H. Fischer\thanks{This work was supported by Deutsche Forschungsgemeinschaft (DFG) within the framework COIN under grants BO~867/29-3 and FI~982/4-3}}
\IEEEauthorblockA{
              Ulm University, 
              Institute of Communications Engineering, 
              89081 Ulm, Germany\\
              Email: \{sven.puchinger, sebastian.stern, martin.bossert, robert.fischer\}@uni-ulm.de}
}

\maketitle

\begin{abstract}
In this paper, a new class of space-time block codes is proposed. The new construction is based on finite-field rank-metric codes in combination with a rank-metric-preserving mapping to the set of Eisenstein integers.
It is shown that these codes achieve maximum diversity order and improve upon existing constructions.
Moreover, a new decoding algorithm for these codes is presented, utilizing the algebraic structure of the underlying finite-field rank-metric codes and employing lattice-reduction-aided equalization.
This decoder does not achieve the same performance as the classical maximum-likelihood decoding methods, but has polynomial complexity in the matrix dimension, making it usable for large field sizes and numbers of antennas.
\end{abstract}

\begin{IEEEkeywords}
Space-Time Codes, Gabidulin Codes, Eisenstein Integers, Decoding, Lattice Reduction
\end{IEEEkeywords}


\section{Introduction}

\noindent
Space-time (ST) codes were introduced in \cite{tarokh1998space} for multiple-input/multiple-output (MIMO) fading channels in point-to-point single-user (multi-antenna) scenarios.
Several code constructions have been proposed so far, both ST convolutional and block codes.
ST codes are usually maximum-likelihood (ML) decoded, yielding an exponential decoding complexity.

An important design criterion for ST codes is that the rank distance of two codewords must be as large as possible~\cite{tarokh1998space}.
In \cite{gabidulin2000space,lusina2003maximum}, finite-field rank-metric codes were used to construct ST block codes by mapping the finite-field elements to a modulation alphabet in the complex plane.
It was shown that this mapping preserves the minimum rank distance of the finite-field code in case of \emph{binary phase-shift keying} and subsets of the Gaussian integers $\GI$ \cite{bossert2002space}, as well as for other important constellations \cite{liu2002rank}.

In this paper, we prove that there is a rank-metric-preserving mapping in the case of Eisenstein integers $\EI$ \cite{conway2013sphere}.
The use of this modulation alphabet promises to improve upon other modulation alphabets in $\CC$, since Eisenstein integers form the hexagonal lattice in $\CC$, the densest possible lattice in a $2$-dimensional real vector space.

Furthermore, we present an alternative decoding method for these ST codes, using \emph{lattice-reduction-aided} (LRA) equalization techniques in combination with a decoding algorithm of the underlying finite-field rank-metric code.
This decoder is sub-optimal in terms of failure probability compared to the classical ML decoding methods, but has polynomial complexity and therefore can be used for a larger set of parameters.

The paper is organized as follows.
In Section~\ref{sec:preliminaries}, we describe the channel model and provide basics on Eisenstein integers and rank-metric codes.
We propose a new ST code construction in Section~\ref{sec:new_ST_construcion} and present alternative decoding methods in Section~\ref{sec:alternative_decoding}.
Section~\ref{sec:Conclusion} concludes the paper.

\section{Preliminaries}
\label{sec:preliminaries}

\subsection{Channel Model}
\label{subsec:channel_model}

We assume a flat-fading MIMO channel with additive white Gaussian noise, i.e.
\begin{align}
\Y = \H \X + \N,
\end{align}
and $\Ntx,\Nrx,\Ntime \in \NN$ denote the numbers of \emph{transmit antennas}, \emph{receive antennas} and \emph{time steps}, respectively, $\X \in \EI^{\Ntx \times \Ntime}$ is the \emph{sent codeword} and $\Y \in \CC^{\Nrx \times \Ntime}$ is the \emph{received word} (both over space (rows) and time (columns)).
$\H \in \CC^{\Nrx \times \Ntx}$ is the \emph{channel matrix}, which is known at the receiver (perfect channel state information) and whose entries are drawn i.i.d.\ from the zero-mean unit-variance complex Gaussian distribution.
Also, $\N \in \CC^{\Nrx \times \Ntime}$ is the \emph{noise matrix}, which is unknown at the receiver and whose entries are sampled i.i.d.\ from a zero-mean complex Gaussian distribution~\cite{tarokh1998space}.
The \emph{signal-to-noise} (SNR) ratio is given by the transmit energy per information bit $E_\mathrm{b,TX}$ in relation to the noise power spectral density $N_0$.

\subsection{Eisenstein Integers}

Let $\w = \eexp{\imag \frac{2 \pi}{3}}$.
Then the ring
\begin{align}
\EI := \ZZ[\w] = \{a+\w b : a,b \in \ZZ \} \subseteq \CC
\end{align}
is called \emph{Eisenstein integers} \cite{conway2013sphere}.
$\EI$ is a \emph{principal ideal domain} (PID), a \emph{Euclidean domain}, and a lattice.
The units of $\EI$ are the sixth roots of unity $\EIunits = \{\eexp{\imag \frac{\ell \pi}{6}} : \ell=1,\dots,6\}$.
Let $\Eprime \in \EI \setminus \{0\}$. Then $\Eprime \EI$ is a sub-lattice of $\EI$ and for any $z \in \CC$ we can define a quantization function
\begin{align}
\Equant{\Eprime \EI}{z} = \argmin\limits_{y \in \Eprime \EI} |z-y|
\end{align}
and a modulo function
\begin{align}
\Emod{\Eprime \EI}{z} = z - \Equant{\Eprime \EI}{z}.
\end{align}
Both $\Equant{\Eprime \EI}{\cdot}$ and $\Emod{\Eprime \EI}{\cdot}$ can be extended to vector or matrix inputs by applying them component-wise.
The \emph{Eisenstein integer constellation} of $\Eprime \in \EI \setminus \{0\}$ is the set
\begin{align}
\EC{\Eprime} = \{\Emod{\Eprime \EI}{z} : z \in \EI\}.
\end{align}
Note that $\EC{\Eprime} = \Voronoi{\Eprime \EI} \cap \EI$, where $\Voronoi{\Eprime \EI}$ is the \emph{Voronoi region} of the lattice $\Eprime \EI$ \cite{fischer2005precoding,stern2015lattice}.
$\EC{\Eprime}$ contains $|\Eprime|^2$ elements.
The resulting signal constellation has a hexagonal boundary region and is more densely packed than a signal constellation of the same cardinality over the \emph{Gaussian integers} or \emph{quadrature amplitude modulation}, cf.~\cite{stern2015lattice}.

Besides its high packing density, Eisenstein integer constellations have another major advantage compared to classical signal constellations: they possess algebraic structure.
In order to use this fact, we need the following lemma.
\begin{lemma}[\!\!\cite{conway2013sphere}]
$\Eprime$ is a prime in $\EI$ if one of the following conditions is true.
\begin{description}
\item[(i)] $\Eprime = u \cdot p$ for some $u \in \EIunits$ and $p$ is a prime in $\NN$ with $p \equiv 2 \mod 3$ (Type $\mathrm{I}$).
\item[(ii)] $|\Eprime|^2 = p$ is a prime in $\NN$ with $p \equiv 1 \mod 3$ or $p=3$ (Type $\mathrm{II}$).
\end{description}
\end{lemma}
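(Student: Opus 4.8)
The plan is to reduce the whole statement to elementary properties of the multiplicative norm $N(\alpha)=|\alpha|^2=\alpha\bar\alpha$ on $\EI$. First I would record the three facts I will lean on: $N(a+\w b)=a^2-ab+b^2$ (using $\w\bar\w=1$ and $\w+\bar\w=-1$); an element $\alpha$ lies in $\EIunits$ exactly when $N(\alpha)=1$; and $\EI$, being a Euclidean domain and hence a unique factorization domain, has the property that an element is prime if and only if it is irreducible. So in both cases it is enough to show that the given element admits no factorization into two non-units.

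Case (ii) should be essentially immediate. If $N(\Eprime)=|\Eprime|^2=p$ is a rational prime and $\Eprime=\beta\gamma$, then $p=N(\beta)N(\gamma)$ forces $N(\beta)=1$ or $N(\gamma)=1$, i.e.\ one factor is a unit; hence $\Eprime$ is irreducible, hence prime. Here the congruence $p\equiv 1\bmod 3$ (or $p=3$) plays no role for the direction being claimed; it is only what guarantees that an element of norm $p$ exists at all.

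For case (i), I would first note that multiplication by a unit preserves primality, so it suffices to show that a rational prime $p\equiv 2\bmod 3$ stays irreducible in $\EI$. Suppose $p=\beta\gamma$ with both factors non-units. Taking norms gives $p^2=N(\beta)N(\gamma)$ with $N(\beta),N(\gamma)>1$, hence $N(\beta)=N(\gamma)=p$; writing $\beta=a+\w b$ this means $a^2-ab+b^2=p$. Reducing modulo $3$ and using $a^2-ab+b^2\equiv(a+b)^2\pmod 3$ shows that every norm is $\equiv 0$ or $1$ modulo $3$, contradicting $p\equiv 2\pmod 3$. So $p$ has no such factorization and is prime.

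I do not expect a genuine obstacle here — the proof is bookkeeping around the norm — and the one step I would flag is the congruence $a^2-ab+b^2\equiv(a+b)^2\pmod 3$, which is precisely what makes rational primes $\equiv 2\bmod 3$ inert in $\EI$. If one wanted the full ``if and only if'' classification rather than the stated sufficiency, the additional work would be to show that every prime of $\EI$ divides a unique rational prime and to sort the rational primes according to the factorization behaviour of $x^2+x+1$ over $\F_p$, but that is not needed for the claim as formulated.
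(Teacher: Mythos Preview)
Your argument is correct and is the standard norm-based proof of this classical fact. Note, however, that the paper does not supply its own proof of this lemma: it is stated with a citation to \cite{conway2013sphere} and used as a black box, so there is nothing in the paper to compare your approach against. Your write-up would serve perfectly well as a self-contained justification should one be desired.
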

We define multiplication and addition of $a,b \in \EC{\Eprime}$ as
\begin{align}
a \Eadd b = \Emod{\Theta \EI}{a + b} \quad \text{and} \quad
a \Emul b = \Emod{\Theta \EI}{a \cdot b},
\end{align}
where $+$ and $\cdot$ are the ordinary operations in $\CC$.
Then the set $\EC{\Eprime}$ with these operations $(\EC{\Eprime},\Eadd,\Emul)$ is a ring and---even stronger---the following theorem holds.
\begin{theorem}[\!\!\cite{conway2013sphere}]\label{thm:isomorphism}
Let $\Eprime$ be a prime in $\EI$. Then $(\EC{\Eprime},\Eadd,\Emul)$ is a finite field. More precisely, the following isomorphisms hold.
\begin{align}
(\EC{\Eprime},\Eadd,\Emul) \cong
\begin{cases}
\F_{p^2}, &\text{if $\Eprime$ is of Type $\mathrm{I}$}, \\
\F_{p}, &\text{if $\Eprime$ is of Type $\mathrm{II}$}.
\end{cases}
\end{align}
\end{theorem}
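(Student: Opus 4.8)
The plan is to recognise $(\EC{\Eprime},\Eadd,\Emul)$ as the residue ring $\EI/\Eprime\EI$ and then to read off its isomorphism type from $\Eprime$. First I would show that $\EC{\Eprime}$ is a complete set of representatives for the cosets of the sublattice $\Eprime\EI$ in $\EI$: for $z\in\EI$ one has $\Equant{\Eprime\EI}{z}\in\Eprime\EI\subseteq\EI$, so $\Emod{\Eprime\EI}{z}=z-\Equant{\Eprime\EI}{z}$ again lies in $\EI$ and, by construction, in $\Voronoi{\Eprime\EI}\cap\EI=\EC{\Eprime}$; conversely every coset of $\Eprime\EI$ meets $\EC{\Eprime}$. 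After fixing a consistent tie-breaking rule on the boundary of $\Voronoi{\Eprime\EI}$ so that distinct elements of $\EC{\Eprime}$ lie in distinct cosets, the reduction map $\varphi\colon\EI\to\EC{\Eprime}$, $z\mapsto\Emod{\Eprime\EI}{z}$, is constant on cosets, surjective, and satisfies $\varphi(z)\equiv z\pmod{\Eprime\EI}$.

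Next I would observe that, because $\Eadd$ and $\Emul$ are by definition the ordinary addition and multiplication of $\CC$ followed by $\Emod{\Eprime\EI}{\cdot}$, the map $\varphi$ is a surjective ring homomorphism from $(\EI,+,\cdot)$ onto $(\EC{\Eprime},\Eadd,\Emul)$ whose kernel $\varphi^{-1}(0)$ is exactly $\Eprime\EI$; the ring isomorphism theorem then yields $(\EC{\Eprime},\Eadd,\Emul)\cong\EI/\Eprime\EI$. Since $\EI$ is a PID and $\Eprime$ is prime, $\Eprime\EI$ is a maximal ideal, hence $\EI/\Eprime\EI$ is a field, and it is finite with $|\EC{\Eprime}|=|\Eprime|^2$ elements, the index $[\EI:\Eprime\EI]$ equalling the norm of $\Eprime$ (equivalently, the ratio of the covolumes of $\Eprime\EI$ and $\EI$). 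A finite field is determined up to isomorphism by its cardinality, so it remains only to evaluate $|\Eprime|^2$: if $\Eprime=u\,p$ is of Type~$\mathrm{I}$ with $u\in\EIunits$ and $p\equiv 2\pmod 3$ a rational prime, then $|\Eprime|^2=|u|^2\,p^2=p^2$ and $\EC{\Eprime}\cong\F_{p^2}$; if $\Eprime$ is of Type~$\mathrm{II}$, then $|\Eprime|^2=p$ by definition and $\EC{\Eprime}\cong\F_{p}$.

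The main obstacle I expect is the first step. One has to argue that the nearest-point quantizer $\Equant{\Eprime\EI}{\cdot}$ can be chosen consistently on the boundary of the Voronoi region so that $\Emod{\Eprime\EI}{\cdot}$ genuinely descends to a bijection between $\EI/\Eprime\EI$ and $\EC{\Eprime}$, and hence that $\Eadd$ and $\Emul$ are precisely the induced quotient operations rather than merely approximations of them. Everything afterwards---maximality of a nonzero prime ideal in a PID, reading the residue-field cardinality off the lattice index, and the classification of finite fields by order---is standard.
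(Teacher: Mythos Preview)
Your argument is correct and is the standard route to this result: identify $(\EC{\Eprime},\Eadd,\Emul)$ with the residue ring $\EI/\Eprime\EI$ via the modulo map, use that $\EI$ is a PID so a nonzero prime ideal is maximal, and read the cardinality off the norm $|\Eprime|^2$. The only delicate point is the one you already flag, namely fixing a consistent tie-breaking convention on the boundary of the Voronoi region so that $\Emod{\Eprime\EI}{\cdot}$ really induces a bijection between cosets and $\EC{\Eprime}$; once that is done the rest is routine.

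However, note that the paper does \emph{not} supply its own proof of this theorem. It is stated with a citation to \cite{conway2013sphere} and used as a known fact, so there is no in-paper argument to compare against. Your write-up would serve perfectly well as a self-contained justification if one were desired.
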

A table of suitable Eisenstein integer constellations of size up to $127$ can be found in \cite[Table~I]{stern2015lattice}, along with a list of constellations which are subsets of the Gaussian integers.
The table also states their resulting average power (mean squared absolute value of a constellation).

\subsection{Rank-Metric and Gabidulin Codes}
\label{subsec:Gabidulin}

Rank-metric codes are sets of matrices where the distance of two elements is measured by the \emph{rank metric} instead of the classical \emph{Hamming metric}.
The most famous class of rank-metric codes are Gabidulin codes, which were independently introduced in \cite{Delsarte_1978,Gabidulin_TheoryOfCodes_1985,Roth_RankCodes_1991} and are used in many applications such as \emph{random linear network coding} \cite{koetter2008coding} and cryptography \cite{gabidulin1991ideals}.

In general, a \emph{rank-metric code} $\Ccode$ over a field $\KK$ is a subset of $\KK^{m \times n}$, along with the \emph{rank metric}
\begin{align}
\dR : \KK^{m \times n} \times \KK^{m \times n} &\to \{0,\dots,\min\{n,m\}\}, \\
(\A,\B) &\mapsto \rk(\A-\B).
\end{align} It has \emph{minimum rank distance}
\begin{align}
d := \min\limits_{\substack{\C_1,\C_2 \in \Ccode\\ \C_1 \neq \C_2}} \dR(\C_1,\C_2).
\end{align}

Let $q$ be a prime power and $m \in \NN$.
Thus, $\Fqm$ can be seen as a vector space of dimension $m$ over $\Fq$ and for some $n \in \NN$, there is a mapping
\begin{align}
\ext : \Fqm^n \to \Fq^{m \times n}, \; \c \mapsto \C,
\end{align}
where each component of the vector $\c$ is extended into a fixed basis\footnote{E.g., $\mathcal{B}=\left(1,\alpha, \alpha^2, \dots, \alpha^{m-1}\right)$, where $\alpha$ is a primitive element of $\Fqm$.} $\mathcal{B}$ of $\Fqm$ over $\Fq$.
The expansion of the $i$th component of $\c$ is then the $i$th column of $\C$.
A \emph{linearized polynomial} over $\Fqm$ of $q$-degree $d_f \in \NN_0$ is a polynomial of the form
\begin{align}
f(X) = \sum\limits_{i=0}^{d_f} f_i X^{q^i}, \quad f_i \in \Fqm, \quad f_{d_f} \neq 0.
\end{align}
The zero polynomial $f(X)=0$ is also a linearized polynomial and has $q$-degree $d_f = -\infty$.
The set of linearized polynomials over $\Fqm$ is denoted by $\Lset$.

Let $k,n \in \NN$ be such that $k < n \leq m$.
We choose $g_1,\dots,g_n \in \Fqm$ to be linearly independent over $\Fq$.
A \emph{Gabidulin code} of length $n$ and dimension $k$ is given by
\begin{align}
\CGab[n,k] = \{[f(g_1),\dots,f(g_n)] : f \in \Lset, d_f < k\}.
\end{align}
The codewords $\c = [f(g_1),\dots,f(g_n)] \in \Fqm^n$ can be interpreted as matrices $\C \in \Fq^{m \times n}$ using the $\ext$ mapping and thus, the rank metric is well-defined.
The minimum rank distance of $\CGab[n,k]$ is $d = n-k+1$ and therefore fulfills the rank-metric Singleton bound with equality \cite{Delsarte_1978,Gabidulin_TheoryOfCodes_1985,Roth_RankCodes_1991}.

It is shown in \cite{silva2008rank}\footnote{In the published version of this paper, a wrong reference was given.} that we can reconstruct $\C \in \CGab$ from
\begin{align}
\C + \E + \AR \BR + \AC \BC,
\end{align}
where $\rk(\E) = \tau$, $\AR \in \Fq^{m \times \rho}, \BR \in \Fq^{\rho \times n}$, $\AC \in \Fq^{m \times \delta}, \BC \in \Fq^{\delta \times n}$,
whenever
\begin{align}
2 \tau + \rho + \delta < d \label{eq:dec_cond}
\end{align}
and $\AR$ and $\BC$ are known at the receiver (\emph{error and erasure decoder}).
The decoding complexity is $\BigO(m^3)$ operations in $\Fq$ \cite{silva2008rank}, or $\Otilde(n^{1.69} m)$ using the algorithms in \cite{puchinger2016subquadratic}, where $\Otilde$ is the asymptotic complexity neglecting $\log(nm)$ factors.

A \emph{criss-cross} error is a matrix that contains non-zero entries only in a limited number of rows and columns, cf.~\cite{Roth_RankCodes_1991}.
In general, if such a matrix can be covered with $\tau_\mathrm{r}$ rows and $\tau_\mathrm{c}$ columns such that outside the cover, there is no error, the matrix has rank $\leq \tau_\mathrm{r} + \tau_\mathrm{c}$.
Therefore, criss-cross and rank errors are closely related.

\section{A New Construction Based on Eisenstein Integers}
\label{sec:new_ST_construcion}

\noindent
In this section, we present a new construction method for ST codes based on finite-field rank-metric codes in combination with Eisenstein integers.
The construction is similar to the one in \cite{bossert2002space}, but uses a different embedding of the finite-field elements into the complex numbers.
We give a proof that this mapping is rank-distance-preserving, which implies that the spacial diversity order of the ST code is lower-bounded by the minimum rank distance of the finite-field code.
Furthermore, we present simulation results that show a coding gain compared to the codes constructed in \cite{bossert2002space}.

\subsection{Code Construction}
\label{subsec:new_ST_construcion}

Let $\Fq$ be a finite field which is isomorphic to an Eisenstein integer constellation $\EC{\Eprime} \subseteq \CC$ with modulo arithmetic $\oplus$ and $\otimes$, cf.~Theorem~\ref{thm:isomorphism}.
We choose an isomorphism\footnote{E.g., if $\Eprime$ is of Type~$\mathrm{II}$, $q$ is a prime, we can write $\Fq = \{0,\dots,q-1\}$, and $\varphi(x) = \Emod{\Eprime\EI}{x}$ for all $x \in \Fq$ is an automorphism.} $\varphi : \Fq \to \EC{\Eprime}$ and extend the mapping to matrices by applying it entry-wise
\begin{align}
\Phi : \Fq^{m \times n} &\to \CC^{m \times n}, \\
[x_{ij}]_{i,j} &\mapsto [\varphi(x_{ij})]_{i,j}.
\end{align}
We can also define a generalized inverse
\begin{align}
\Phi^{-1} : \EI &\to \Fq^{m \times n}, \\
[x_{ij}]_{i,j} &\mapsto [\varphi^{-1}(\Emod{\Eprime\EI}{x_{ij}})]_{i,j}.
\end{align}
The following theorem lays the foundation for a new class of ST codes based on Eisenstein integers.
\begin{theorem}\label{thm:Phi_rank_preserving}
The mapping $\Phi$ is minimum rank-distance-preserving, i.e., for any rank-metric code $\Ccode \subseteq \Fq^{m \times n}$ of minimum distance $d$ the code $\CcodeE = \Phi(\Ccode) \subseteq \CC^{m \times n}$ has minimum distance $d$.
\end{theorem}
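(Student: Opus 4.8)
The plan is to show that $\Phi$ preserves the rank of every nonzero matrix difference, from which distance preservation follows immediately since $\Ccode$ is linear-difference closed (more precisely, since the rank metric is translation-invariant, $\dR(\C_1,\C_2) = \rk(\C_1 - \C_2)$ over $\Fq$, and likewise over $\CC$, so it suffices to compare $\rk_{\Fq}(\ve M)$ with $\rk_{\CC}(\Phi(\ve M) - \Phi(\ve M'))$ — but note $\Phi$ is \emph{not} additive, so one must be careful here). The clean route is: for $\C_1 \neq \C_2$ in $\Ccode$, set $a = \dR(\C_1,\C_2) = \rk_{\Fq}(\C_1 - \C_2)$ and $b = \dR(\Phi(\C_1),\Phi(\C_2)) = \rk_{\CC}(\Phi(\C_1) - \Phi(\C_2))$, and prove $a = b$ by establishing $b \le a$ and $b \ge a$ separately.

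For the inequality $b \le a$: write $\C_1 - \C_2$ over $\Fq$ as a sum of $a$ rank-one matrices, $\C_1 - \C_2 = \sum_{t=1}^a \u_t \ve v_t^{\mathsf T}$ with $\u_t \in \Fq^m$, $\ve v_t \in \Fq^n$. Apply the isomorphism $\varphi$ entrywise and use that $\varphi$ is a ring homomorphism from $(\Fq,+,\cdot)$ to $(\EC{\Eprime},\oplus,\otimes)$: the entries of $\Phi(\C_1) - \Phi(\C_2)$, reduced mod $\Eprime\EI$, equal $\varphi$ applied to the entries of $\C_1 - \C_2$. Lifting the rank-one decomposition through $\varphi$ and then through the natural inclusion $\EC{\Eprime} \hookrightarrow \EI \subseteq \CC$ shows that $\Phi(\C_1) - \Phi(\C_2)$ agrees, modulo $\Eprime\EI$ entrywise, with a matrix of $\CC$-rank at most $a$; the key point is that reduction mod $\Eprime\EI$ corresponds to a ring quotient $\EI \to \EI/\Eprime\EI \cong \Fq$, so a $\CC$-rank-$a$ matrix over $\EI$ maps to an $\Fq$-rank-$\le a$ matrix, and conversely any $\Fq$-matrix of rank $a$ lifts to an $\EI$-matrix of rank $\le a$ (lift each $\u_t, \ve v_t$ entrywise, using that $\varphi^{-1}$ composed with $\Emod{\Eprime\EI}{\cdot}$ is a well-defined section on representatives). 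Since $\Phi(\C_i)$ already has entries \emph{in} $\EC{\Eprime}$, one must check that $\Phi(\C_1) - \Phi(\C_2)$ is \emph{the} canonical representative, i.e. equals the lifted rank-$\le a$ matrix exactly, not merely mod $\Eprime\EI$ — this forces a small argument that the difference of two canonical representatives, reduced back, is handled correctly; I expect this to go through because the lift can be chosen so its entries land in the Voronoi fundamental domain.

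For the reverse inequality $b \ge a$, i.e. $a \le b$: apply $\Phi^{-1}$ (the generalized inverse defined via $\Emod{\Eprime\EI}{\cdot}$ then $\varphi^{-1}$, which is exactly the ring quotient $\CC \supseteq \EI \to \EI/\Eprime\EI \cong \Fq$ on Eisenstein integers) to $\Phi(\C_1) - \Phi(\C_2)$. Since $\Phi^{-1} \circ \Phi = \mathrm{id}$ on $\Fq^{m\times n}$, we recover $\C_1 - \C_2$; and since the quotient map is $\Fq$-linear on the image and can only decrease rank (the image of a $\CC$-rank-$b$ matrix over $\EI$ under the ring quotient to $\Fq$ has $\Fq$-rank $\le b$), we get $a = \rk_{\Fq}(\C_1 - \C_2) \le b$. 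The main obstacle, and the part deserving the most care, is the non-additivity of $\Phi$: the statement "$\Phi(\C_1) - \Phi(\C_2)$ reduces mod $\Eprime\EI$ to $\varphi$(entries of $\C_1-\C_2$)" must be justified entrywise using that subtraction in $\CC$ followed by $\Emod{\Eprime\EI}{\cdot}$ agrees with $\ominus$ (the inverse of $\oplus$) in $\EC{\Eprime}$, which in turn agrees with $\varphi$ of the $\Fq$-difference because $\varphi$ is a field isomorphism. Once that bookkeeping is pinned down, both inequalities are short, and $a=b$ gives the theorem. I would also remark that the argument never uses minimality of $d$ — it shows $\Phi$ preserves the rank of \emph{every} codeword difference, hence the whole rank-distance distribution, not just the minimum.
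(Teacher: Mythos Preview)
Your inequality $a\le b$ is correct and is, in essence, what the paper establishes. The paper's presentation differs: it picks $d$ linearly independent columns of $\C_1-\C_2$, extends them to an invertible $m\times m$ matrix over $\Fq$, and then shows that the determinant of the corresponding complex $m\times m$ matrix is nonzero because its reduction modulo $\Eprime$ is nonzero (Lemmas~\ref{lem:mod_det} and~\ref{lem:det_invariant_under_phi}). Your argument via ``the ring quotient $\EI\to\EI/\Eprime\EI\cong\Fq$ cannot increase rank'' is the same fact phrased through minors rather than through a single extended determinant, and is arguably cleaner. Note, incidentally, that the paper's proof itself concludes only with $\rk(\Phi(\C_1)-\Phi(\C_2))\ge d$; that is exactly your $a\le b$ direction.

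Your other inequality $b\le a$, however, is not a gap to be patched: it is false. The hope that ``the lift can be chosen so its entries land in the Voronoi fundamental domain'' does not help, because the entries of $\Phi(\C_1)-\Phi(\C_2)$ lie in $\EC{\Eprime}-\EC{\Eprime}$, which strictly contains $\EC{\Eprime}$, and this matrix can differ from any rank-$a$ lift of $\C_1-\C_2$ by an arbitrary matrix in $(\Eprime\EI)^{m\times n}$. Concretely, take $q=7$, $\Eprime=3+\w$ (so $|\Eprime|^2=7$ and $\EC{\Eprime}=\{0\}\cup\EIunits$), with the natural isomorphism $\varphi(k)=\Emod{\Eprime\EI}{k}$, giving $\varphi(2)=\w^2$, $\varphi(3)=-\w$, $\varphi(4)=\w$, $\varphi(5)=-\w^2$, $\varphi(6)=-1$. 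For
\[
\C_1=\begin{pmatrix}1&3\\4&6\end{pmatrix},\qquad \C_2=\begin{pmatrix}0&2\\3&5\end{pmatrix},
\]
one has $\C_1-\C_2=\bigl(\begin{smallmatrix}1&1\\1&1\end{smallmatrix}\bigr)$ of $\F_7$-rank $1$, while
\[
\Phi(\C_1)-\Phi(\C_2)=\begin{pmatrix}1&-\w-\w^2\\2\w&-1+\w^2\end{pmatrix}=\begin{pmatrix}1&1\\2\w&-1+\w^2\end{pmatrix}
\]
has determinant $-2-3\w\neq 0$, hence $\CC$-rank $2$. So individual rank distances can strictly increase under $\Phi$, and your concluding remark that the whole rank-distance distribution is preserved is incorrect. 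Only $a\le b$ holds in general, which is precisely what the paper proves and what the application to spatial diversity requires.
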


\begin{proof}
W.l.o.g., $n \leq m$;
otherwise transpose all matrices.
Let $\Cone,\Ctwo \in \Ccode$, $\Cone \neq \Ctwo$.
Then, $\rk(\Cone-\Ctwo) \geq d$.
Take $d$ linearly independent columns of $\Cone-\Ctwo$, w.l.o.g. $\cone_1-\ctwo_1,\dots,\cone_d-\ctwo_d \in \Fq^m$.
We can expand this set of vectors to a basis $\cone_1-\ctwo_1,\dots,\cone_d-\ctwo_d,\ct_{d+1},\dots,\ct_m$ of $\Fq^m$ and define the matrices
\begin{align}
\Ctone &= [\cone_1,\dots,\cone_d,\ct_{d+1},\dots,\ct_m] \in \Fq^{m \times m}, \\
\Cttwo &= [\ctwo_1,\dots,\ctwo_d,\zero,\dots,\zero] \in \Fq^{m \times m}.
\end{align}
Thus, $\rk(\Ctone-\Cttwo) = m$ and $\det(\Ctone-\Cttwo) \neq 0$.

Since $\varphi : \Fq \to \EI$ is an isomorphism, we know that $\Emod{\Eprime\EI}{\Phi(\Ctone)-\Phi(\Cttwo)} = \Phi(\Ctone-\Cttwo)$, implying that there is an $\A \in (\Eprime\EI)^{m \times m}$ such that $\Phi(\Ctone)-\Phi(\Cttwo) = \A + \Phi(\Ctone-\Cttwo)$.
It follows from Lemmas~\ref{lem:mod_det} $\mathrm{(*)}$ and \ref{lem:det_invariant_under_phi} $\mathrm{(**)}$ (in the appendix) that
\begin{align}
a &:= \Emod{\Eprime\EI}{\det(\Phi(\Ctone)-\Phi(\Cttwo))} \\
&= \Emod{\Eprime\EI}{\det(\A + \Phi(\Ctone-\Cttwo))} \\
&\overset{\mathrm{(*)}}{=} \Emod{\Eprime\EI}{\det(\Phi(\Ctone-\Cttwo))} \\
&\overset{\mathrm{(**)}}{=} \varphi(\underset{\neq 0}{\underbrace{\det(\Ctone-\Cttwo)}}) \neq 0.
\end{align}
Thus, $\det(\Phi(\Ctone)-\Phi(\Cttwo)) = a + b \neq 0$, for some $b \in \Eprime\EI$ ($\Eprime \nmid a$ but $\Eprime \mid b$), and $\rk(\Phi(\Ctone)-\Phi(\Cttwo))=m$ (full rank).
Hence, the first $d$ columns of $\Phi(\Cone)-\Phi(\Cone)$ are linearly independent and
\begin{align}
\rk(\Phi(\Cone)-\Phi(\Ctwo)) \geq d,
\end{align}
proving the claim.
\end{proof}

It can be shown (one of the two design criteria in \cite{tarokh1998space}) that the diversity order of an ST code is lower-bounded by its rank distance.
Since the mapping $\Phi$ is rank-distance-preserving, we can design the diversity order of the ST code by choosing the finite-field rank-metric code accordingly.
\begin{example}
\label{ex:full_diversity_gabidulin_ST_code}
We can take a Gabidulin code $\CGab[n,k]$ over the field $\Fqm$ with minimum distance $d = n-k+1$ and obtain an ST code $\CST = \Phi(\CGab)$ with spatial diversity $d$.
The resulting codewords $\X \in \CST$ are complex matrices of dimension $m \times n$ (we must choose $m = \Ntx$ and $n = \Ntime$. If $\Ntx>\Ntime$, we transpose the codewords and set $m = \Ntime$ and $n = \Ntx$).
Since $k$ can be chosen $1\leq k < n$, we are flexible in the tradeoff between code rate $\tfrac{k}{n}$ and diversity $d=n-k+1$.
In the special case of $k=1$ (rank-metric repetition code equivalent), the resulting code $\CST$ has maximum diversity $n$.
\end{example}

\subsection{Numerical Results}

Figure~\ref{fig:simulation_ML} shows simulation results (\emph{frame error rate} (FER) over SNR) of a comparison of Gaussian integer ST codes from finite-field Gabidulin codes \cite{bossert2002space} and our construction presented in Section~\ref{subsec:new_ST_construcion}.
We use the channel model described in Section~\ref{subsec:channel_model} with $\Ntx=\Nrx=\Ntime=4$.
The ST codes are defined using a $\CGab[4,1]$ code of minimum distance $d=4$.
As usual, we ML-decode ST codes by determining 
\begin{align}
\hat{\X} = \argmin\limits_{\X' \in \Ccode} \|\H \X' - \Y\|_\mathrm{F}, \label{eq:ML_dec}
\end{align}
where $\|\cdot\|_\mathrm{F}$ is the Frobenius norm, using exhaustive search.

We use $q=13$ and $q=37$ as the field size since for both, there are Gaussian and Eisenstein primes whose constellations are isomorphic to $\F_{13}$ and $\F_{37}$ respectively, cf.~\cite{stern2015lattice}.

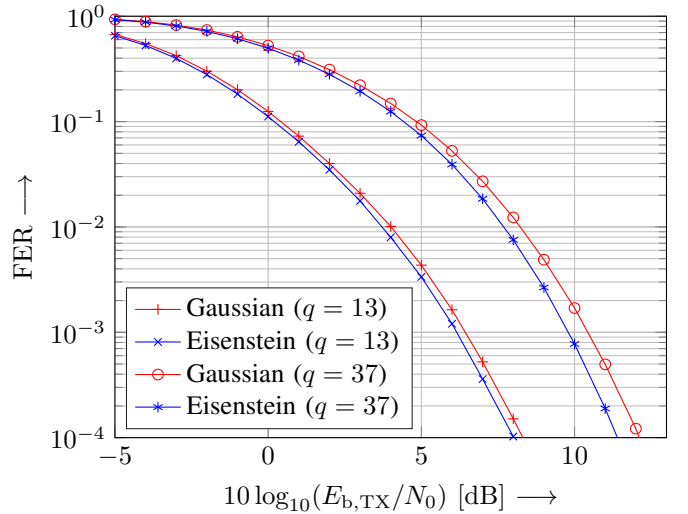
\begin{figure}[b]
%
%
\begin{tikzpicture}

\begin{axis}[%
width=0.4\textwidth,
height=2.2in,
at={(0.758in,0.481in)},
scale only axis,
xmin=-5,
xmax=13,
xlabel={$10 \log_{10}(E_\mathrm{b,TX}/N_0)$ $\mathrm{[dB]}$ $\longrightarrow$},
xmajorgrids,
ymode=log,
ymin=1e-04,
ymax=1,
yminorticks=true,
ylabel={$\mathrm{FER}$ $\longrightarrow$},
ymajorgrids,
yminorgrids,
axis background/.style={fill=white},
title style={font=\bfseries},
legend style={at={(0.02,0.02)},anchor=south west, legend cell align=left,align=left,draw=white!15!black}
]
\addplot [color=red,solid,mark=+]
  table[row sep=crcr]{%
-8    0.908521127709156\\
-7    0.853307560716066\\
-6    0.774463556890583\\
-5    0.672100917307615\\
-4    0.551023047185442\\
-3    0.422381766995784\\
-2    0.301369899917833\\
-1    0.200781260523445\\
0    0.124869677662684\\
1    0.0727824997642748\\
2    0.0400274788184108\\
3    0.0208865959940193\\
4    0.010111935774997\\
5    0.00433734290601975\\
6    0.00163660609652608\\
7    0.000523983351068845\\
8    0.000150864101078948\\
9    4.04100270747181e-05\\
10    4.04100270747181e-06\\
11    1.3470009024906e-06\\
12    0\\
13    0\\
14    0\\
15    0\\
16    0\\
17    0\\
18    0\\
19    0\\
20    0\\
21    0\\
22    0\\
23    0\\
24    0\\
25    0\\
26    0\\
27    0\\
};
\addlegendentry{Gaussian ($q=13$)};

\addplot [color=blue,solid,mark=x]
  table[row sep=crcr]{%
-8	0.903494120341061\\
-7	0.844508950820997\\
-6	0.761462304179744\\
-5	0.653563490887539\\
-4	0.528101132827759\\
-3	0.398305472864667\\
-2	0.279623917348025\\
-1	0.182431067228815\\
0	0.111659639811959\\
1	0.0643772141327335\\
2	0.0348603833564569\\
3	0.0176551408289444\\
4	0.0079715513409394\\
5	0.0033553792481041\\
6	0.0012042188068266\\
7	0.000359649240964991\\
8	0.000102372068589286\\
9	2.69400180498121e-05\\
10	1.3470009024906e-06\\
11	0\\
12	0\\
13	0\\
14	0\\
15	0\\
16	0\\
17	0\\
18	0\\
19	0\\
20	0\\
21	0\\
22	0\\
23	0\\
24	0\\
25	0\\
26	0\\
27	0\\
};
\addlegendentry{Eisenstein ($q=13$)};

\addplot [color=red,solid,mark=o]
  table[row sep=crcr]{%
-6    0.963157512153014\\
-5    0.934157466721185\\
-4    0.889223570033165\\
-3    0.824658579801009\\
-2    0.739559311253464\\
-1    0.637769297169597\\
0    0.526732997137795\\
1    0.415377765662623\\
2    0.311622370632865\\
3    0.221691881332061\\
4    0.148552996229158\\
5    0.0926836581709145\\
6    0.0527118259052292\\
7    0.0271091726863841\\
8    0.012317477624824\\
9    0.00488664758529826\\
10    0.00169915042478761\\
11    0.000496115578574349\\
12    0.000121757303166599\\
13    1.99900049975012e-05\\
14    3.63454636318205e-06\\
15    0\\
};
\addlegendentry{Gaussian ($q=37$)};

\addplot [color=blue,solid,mark=asterisk]
  table[row sep=crcr]{%
-6    0.959330334832584\\
-5    0.927665258279951\\
-4    0.878694289219027\\
-3    0.808789241742765\\
-2    0.718396256417246\\
-1    0.612039434828041\\
0    0.497678433510517\\
1    0.384103402844033\\
2    0.281303893507792\\
3    0.194164735813911\\
4    0.124624051610558\\
5    0.0737058743355595\\
6    0.0392485575394121\\
7    0.0185034755349598\\
8    0.00754350097678433\\
9    0.00265867066466767\\
10    0.000775067011948571\\
11    0.000188996410885466\\
12    3.81627368134115e-05\\
13    8.1777293171596e-06\\
14    9.08636590795511e-07\\
15    0\\
};
\addlegendentry{Eisenstein ($q=37$)};

\end{axis}
\end{tikzpicture}%
\caption{ML decoding results for $\Ntx=\Nrx=\Ntime=4$, using a Gabidulin code $\CGab[4,1]$ over $\F_{q^4}$, mapped to Gaussian and Eisenstein integer constellations with $q \in \{13,37\}$.
I.i.d.\ unit-variance complex-Gaussian channel matrix $\H$, additive i.i.d.\ complex-Gaussian noise matrix $\N$.}
\label{fig:simulation_ML}
\end{figure}

In both scenarios, our construcion provides a coding gain compared to the Gaussian integer ST codes from~\cite{bossert2002space}.
At $\mathrm{FER}= 10^{-3}$, for $q=13$, the gain is approximately $0.3$ dB and in the $q=37$ case, we are more than $0.6$ dB better.
This gain is expected since Eisenstein integers are more densely packed in the complex plane than Gaussian integers, cf.~\cite{conway2013sphere}.

\section{Alternative Decoding}
\label{sec:alternative_decoding}

\noindent
The complexity of the ML-decoding method used above is proportional to the number of ST codewords.
For instance, the ST code constructed in Example~\ref{ex:full_diversity_gabidulin_ST_code} has $q^\Ntx$ codewords and ML decoding is not possible in sufficiently short time already for small field sizes $q$ or transmit antenna numbers $\Ntx$.

It is interesting to note that although rank-metric codes have been used before to construct new ST codes, to the best of our knowledge, their decoding has not yet been employed.\footnote{In \cite{robert2015new} a decoder of a generalized Gabidulin code is used. In their channel model, $\H$ is always the identity matrix and $\N$ naturally contains criss-cross error patterns. Hence, it differs significantly from the channel model for which ST codes were originally designed \cite{tarokh1998space}.}
In this section, we propose a new decoding scheme which utilizes the decoding capabilities of Gabidulin codes in combination with a channel transformation based on LRA equalization.
For simplicity, we assume $\Ntx=\Nrx$, implying that $\H$ is invertible with probability $1$ (see, e.g., \cite{fischer2016factorization} on how LRA equalization works if $\Ntx \neq \Nrx$).

\subsection{Channel Transformation using LRA Techniques}
\label{subsec:channel_transformation}

In LRA zero-forcing linear equalization \cite{windpassinger2003low,fischer2016factorization}, the inverse channel matrix $\H^{-1}$ is decomposed\footnote{See \cite{fischer2016factorization} for an overview of different factorization criteria.} into
\begin{align}
\H^{-1} = \Z \Fred
\end{align}
such that $\Z \in \EI$, $\det \Z \in \EIunits$ (implying $\Z^{-1} \in \EI$), and the maximum of the row norms
\begin{align}
\max_{i} \|\f_i\|_2 \quad (\f_i \text{ is the $i$th row of $\Fred$})
\end{align}
is minimal among all decompositions.
The problem of finding such a decomposition is equivalent to solving the \emph{shortest basis problem} (SBP) in an $\EI$-lattice\footnote{The same decomposition is possible for Gaussian integers. However, it performs better (in terms of $\max_{i} \|\f_i\|_2$) for Eisenstein integers, cf.\ \cite{stern2015lattice}.} (with the rows of $\H^{-1}$ forming a basis of the lattice).
The SBP is NP-hard.
However, we can find an approximate solution using the LLL algorithm\footnote{For Eisenstein integers, the LLL algorithm has to be adapted, cf.~\cite{stern2015lattice,stern2016advanced}.} in time $\BigO(m^4)$.
Since we know $\Fred$, we can compute the alternative receive matrix
\begin{align}
\tilde{\Y} = \Fred \Y = \Z^{-1} \X + \Fred \N.
\end{align}
Due to $\Z^{-1} \X \in \EI^{m \times n}$, we can make a component-wise decision of the entries of $\tilde{\Y}$ to the closest point in $\EI$ using the quantization function and obtain
\begin{align}
\hat{\Y} = \Equant{\EI}{\tilde{\Y}} = \Z^{-1} \X + \Equant{\EI}{\Fred \N} =: \Z^{-1} \X + \E.
\end{align}
Since $\hat{\Y} \in \EI^{m \times n}$, we can use the generalized inverse of $\Phi$ to get back to finite fields
\begin{align}
\hat{\Y}_\F &= \Phi^{-1}(\hat{\Y}) = \Phi^{-1}(\Z^{-1}) \Phi^{-1}(\X) + \Phi^{-1}(\E) \\
&=: \Z_\F^{-1} \X_\F^{\vphantom{-1}} + \E_\F^{\vphantom{-1}}.
\end{align}
Also, $\det(\Z_\F^{-1}) = \det (\Phi^{-1}(\Z^{-1})) = \varphi^{-1}(\det(\Z)^{-1}) \neq 0$ (since $\det(\Z)$ is a unit in $\EI$) and thus, $\Z_\F^{-1}$ is invertible and we can compute
\begin{align}
\bar{\Y}_\F = \Z_\F\hat{\Y}_\F = \X_\F + \Z_\F\E_\F.
\end{align}
We have transformed the MIMO fading channel, which can be seen as a \emph{multiplicative additive matrix channel} over $\CC$, into an \emph{additive matrix channel} over $\Fq$.
Figure~\ref{fig:illustration_channel_trafo} illustrates the channel transformation procedure.

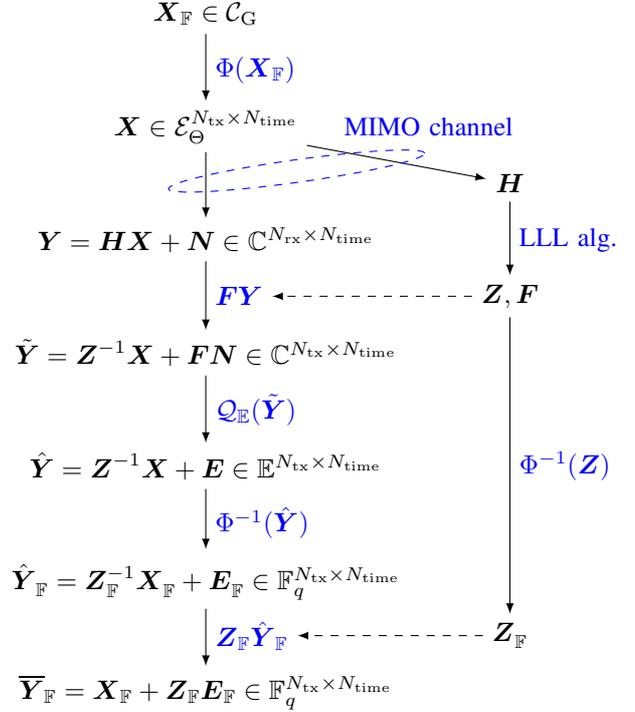
\begin{figure}[ht]
\begin{tikzpicture}
\def\ydif{1.5}
\def\xdif{2}
\node (Z) at (0,\ydif) {$\X_\F \in \CGab$};
\node[right] (A1) at (0,0.5*\ydif) {\labelcolor{$\Phi(\X_\F)$}};
\node (A) at (0,0) {$\X \in \EC{\Eprime}^{\Ntx \times \Ntime}$};
\node (B) at (0,-\ydif) {$\Y = \H \X + \N \in \CC^{\Nrx \times \Ntime}$};
\node[right] (C1) at (0,-1.5*\ydif) {\labelcolor{$\Fred \Y$}};
\node (C) at (0,-2*\ydif) {$\tilde{\Y} = \Z^{-1} \X + \Fred \N \in \CC^{\Ntx \times \Ntime}$};
\node[right] (D1) at (0,-2.5*\ydif) {\labelcolor{$\Equant{\EI}{\tilde{\Y}}$}};
\node (D) at (0,-3*\ydif) {$\hat{\Y} = \Z^{-1} \X + \E \in \EI^{\Ntx \times \Ntime}$};
\node[right] (E1) at (0,-3.5*\ydif) {\labelcolor{$\Phi^{-1}(\hat{\Y})$}};
\node (E) at (0,-4*\ydif) {$\hat{\Y}_\F^{\vphantom{-1}} = \Z_\F^{-1} \X_\F^{\vphantom{-1}} + \E_\F^{\vphantom{-1}} \in \Fq^{\Ntx \times \Ntime}$};
\node[right] (F1) at (0,-4.5*\ydif) {\labelcolor{$\Z_\F \hat{\Y}_\F$}};
\node (F) at (0,-5*\ydif) {$\bar{\Y}_\F = \X_\F + \Z_\F \E_\F \in \Fq^{\Ntx \times \Ntime}$};

\node (A2) at (2*\xdif,-0.5*\ydif) {$\H$};
\node[right] (B3) at (2*\xdif,-1*\ydif) {\labelcolor{LLL alg.}};
\node (B2) at (2*\xdif,-1.5*\ydif) {$\Z,\Fred$};
\node[right] (F3) at (2*\xdif,-3*\ydif) {\labelcolor{$\Phi^{-1}(\Z)$}};
\node (F2) at (2*\xdif,-4.5*\ydif) {$\Z_\F^{}$};

\draw[->,>=latex] (Z) -- (A);
\draw[->,>=latex] (A) -- (B);
\draw[->,>=latex] (B) -- (C);
\draw[->,>=latex] (C) -- (D);
\draw[->,>=latex] (D) -- (E);
\draw[->,>=latex] (E) -- (F);

\draw[->,>=latex] (A) -- (A2);
\draw[->,>=latex] (A2) -- (B2);
\draw[->,>=latex] (B2) -- (F2);

\draw[->,>=latex,dashed] (B2) -- (C1);
\draw[->,>=latex,dashed] (F2) -- (F1);

\draw[rotate=8,dashed,darkgreen] (1.1,-0.75) ellipse (1.7 and 0.15);
\node[right] (A1) at (1.7,0) {\labelcolor{MIMO channel}};
\end{tikzpicture}
\caption{Illustration of the channel transformation.}
\label{fig:illustration_channel_trafo}
\end{figure}

\subsection{Decoding Using Rank-Metric Decoder}
\label{subsec:FN_properties_cris_cross_patterns}

In order to see how rank-metric codes can be used to correct errors of the form $\Z_\F^{}\E_\F^{}$, we have a closer look at the error matrix $\E$.
An entry of $\E$ is non-zero if the corresponding entry in $\Fred \N$ is large enough (by absolute value) to be closer to some element of $\EI \setminus \{0\}$ than to $0$.
It can be observed that the rows of $\Fred$ have different norms $\|\f_i\|_2$. Since the entries of $\N$ are i.i.d. $\Norm(0,\sigmaN)$ distributed for some noise variance $\sigmaN$, an entry in the $i$th row of $\Fred \N$ is $\Norm(0,\|\f_i\|_2^2 \sigmaN)$ distributed (and i.i.d. to other entries in that row).
Thus, those rows of $\E$ with larger $\|\f_i\|_2$ tend to contain more errors than others. Since the $\|\f_i\|_2$'s might differ a lot,\footnote{Finding the distribution of the row norms of $\Fred$ is an open problem and is beyond the scope of this paper since it would involve a detailed analysis of the numerical properties of the LLL algorithm.} in general, non-zero entries of $\E$ tend to occur row-wise.

Also, entries in columns are no longer independent and thus, if there is a relatively large entry in $\N$, this value might influence the entries of the entire column in $\Fred \N$, or $\E$.

We can thus conclude that $\E$ tends to contain criss-cross error patterns and therefore has low rank.
We cannot use arbitrary criss-cross error correcting codes because the multiplication by $\Z_\F^{}$ in the final error matrix destroys the criss-cross pattern.
However, the rank is preserved, meaning that the matrix $\Z_\F \E_\F$ tends to have low rank and can be corrected using a rank-metric code.

\begin{example}\label{ex:ex1}
Let $\Ntx=\Nrx=\Ntime=7$ and $6~\mathrm{dB}$ SNR.
A realistic output of the channel matrix decomposition is $\Fred \in \CC^{7 \times 7}$ with squared row norms:
\begin{center}
\begin{tabular}{c|c|c|c|c|c|c|c}
\hline
$i$ & $1$ & $2$ & $3$ & $4$ & $5$ & $6$ & $7$ \\
\hline
$\|\f_i\|_2^2$ & $0.38$ & $0.21$ & $0.28$ & $0.19$ & $0.20$ & $0.34$ & $0.24$\\
\hline
\end{tabular}
\end{center}
For instance, the error matrix $\E_\F^{}$ in the channel transformation procedure can have the form (here, $*$ means that this entry is non-zero, all other entries are zero)
\begin{align}
\arraycolsep=1pt\def\arraystretch{0.5}
\E_\F =
\begin{bmatrix}
*&*& &*&*&*&*\\
 & & & & &*& \\
*&*&*& &*&*& \\
 & & & & & & \\
 & & & & &*& \\
*& &*& &*&*&*\\
 & & & & &*& \\
\end{bmatrix}
\; \Rightarrow
\; \rk(\E_\F) \leq 4.
\end{align}
The rows which contain many errors ($i=1$, $3$ and $6$) are due to large values of $\|\f_i\|_2^2$ and the corrupted column ($j=6$) results from a large value in the $j$th column of the original noise matrix $\N$, which spreads through the entire column due to the matrix multiplication $\Fred \N$.
\end{example}

\subsection{Improved Decoding Using GMD}
\label{subsec:gmd}

Since we know $\Fred$, the squared row norms $\|\f_i\|_2^2$ provide reliability information of the rows of $\E$.
Thus, we can use \emph{generalized minimum distance} (GMD) decoding \cite{bossert1999channel} in combination with an error-and-erasure decoding algorithm for Gabidulin codes (cf. Section~\ref{subsec:Gabidulin}) to obtain better results.

More exactly, we can start by trying to decode without erasures.
Then, incrementally from $\ell=1$ to $d-1$, we estimate the likeliest $\ell$ rows of $\E_\F$ which are in error, using the soft information given by the $\|\f_i\|_2$'s, say $\Eps_\ell \subseteq \{1,\dots,m\}$, $|\Eps_\ell| = \ell$ (e.g., $\Eps_2 = \{1,6\}$ in Example~\ref{ex:ex1}).
Then we can decompose the error into
\begin{align}
\Z_\F^{} \E_\F^{} = \Z_\F^{} \E_\F' + \Z_\F^{} \E_\F'',
\end{align}
where $\E_\F''$ contains non-zero values only in the rows $\Eps_\ell$ and $\E_\F'$ has zero rows in $\Eps_\ell$.
We can re-write $\Z_\F^{}\E_\F'' = [\Z_\F^{}]_{\Eps_\ell} \EFell$, where $\ZFell \in \Fq^{m \times \ell}$ consists of the columns of $\Z_\F^{}$ with indices in $\Eps_\ell$ and the rows of $\EFell \in \Fq^{\ell \times m}$ are the non-zero rows of $\E_\F''$.
The procedure is illustrated in the following example.

\begin{example}
Let $\E_\F^{}$ be as in Example~\ref{ex:ex1} and $k=1$. Thus, our finite-field Gabidulin code has parameters $[7,1]$, minimum rank distance $7$, and we cannot correct the rank error with a half-the minimum rank distance decoder since $\rk(\E_\F) = 4 > 3 = \frac{d-1}{2}$.
Using GMD, we can, e.g., declare $\ell=2$ erasures as follows (recall that $\Eps_2 = \{1,6\}$):
\begin{align}
\Z_\F^{}
\arraycolsep=1pt\def\arraystretch{0.5}
\begin{bmatrix}
*&*& &*&*&*&*\\
 & & & & &*& \\
*&*&*& &*&*& \\
 & & & & & & \\
 & & & & &*& \\
*&\phantom{*}&*&\phantom{*}&*&*&*\\
 & & & & &*& \\
\end{bmatrix}
&=
\Z_\F^{}
\left(
\arraycolsep=1pt\def\arraystretch{0.5}
\begin{bmatrix}
 & & & & &\phantom{*}&\phantom{*}\\
 & & & & &*& \\
*&*&*&\phantom{*}&*&*& \\
 & & & & &\phantom{*}& \\
 & & & & &*& \\
 & & & & &\phantom{*} \\
 & & & & &*& \\
\end{bmatrix}
+
\arraycolsep=1pt\def\arraystretch{0.5}
\begin{bmatrix}
*&*& &*&*&*&*\\
 & & & & & & \\
 & & & & & & \\
 & & & & & & \\
 & & & & & & \\
*& &*& &*&*&*\\
 & & & & & & \\
\end{bmatrix}
\right) \notag \\
&=
\Z_\F^{} \E_\F'
+
[\Z_\F^{}]_{\Eps_2}
\arraycolsep=1pt\def\arraystretch{0.5}
\begin{bmatrix}
*&*& &*&*&*&*\\
*& &*& &*&*&*\\
\end{bmatrix},
\end{align}
where $\rk(\Z_\F^{} \E_\F') = 2$ (unknown) and $[\Z_\F^{}]_{\Eps_2} \in \Fq^{7 \times 2}$ (known) consists of the columns of $\Z_\F^{}$ with indices $\Eps_2$.
Thus, we can correctly decode due to \eqref{eq:dec_cond} and
\begin{align}
2 \cdot \rk(\Z_\F^{} \E_\F') + \rk(\ZFell) = 4 + 2 < 7 = d.
\end{align}
\end{example}

If we use a Gabidulin code of dimension $1$ as in \cite{bossert2002space} or Example~\ref{ex:full_diversity_gabidulin_ST_code}, we need to know only one row in $\Z_\F^{} \X_\F^{}$ which does not contain an error for decoding successfully.
Since there are only as many possibilities as there are rows, we can simply ``try'' all rows, meaning that iteratively for each row $i$ we declare an erasure in all other rows than the $i$th one, decode and obtain a candidate codeword.
Among these candidates, we then find the one with minimum Frobenius norm difference to the received word as in \eqref{eq:ML_dec}.
We call this method \emph{multi-trial} (MT) GMD decoding here.

\subsection{Numerical Results}
\label{subsec:alt_dec_numerical_results}

Figure~\ref{fig:simulation_MMSE_pure} shows simulation results.
We use ST codes based on a $\CGab[4,1]$ code of minimum distance $d=4$ with an Eisenstein integer constellation of size $q=13$, and the channel model described in Section~\ref{subsec:channel_model} with $\Ntx=\Nrx=\Ntime=4$.
We compare ML decoding to the alternative decoding methods described in this section; BMD as in Section~\ref{subsec:FN_properties_cris_cross_patterns} and both GMD and MT GMD as in Section~\ref{subsec:gmd}.

For comparison, we perform factorization and equalization based on both zero-forcing (ZF) linear equalization (as described in Section~\ref{subsec:channel_transformation}) and the \emph{minimum mean-squared error} (MMSE) criterion.
The latter is not described here in detail for reasons of clarity, but can, e.g., be found in \cite{fischer2016factorization}.

\begin{figure}[ht!]
%
%
\begin{tikzpicture}

\begin{axis}[%
width=0.4\textwidth,
height=2.2in,
at={(0.758in,0.481in)},
scale only axis,
xmin=-8,
xmax=20,
xlabel={$10 \log_{10}(E_\mathrm{b,TX}/N_0)$ $\mathrm{[dB]}$ $\longrightarrow$},
xmajorgrids,
ymode=log,
ymin=1e-04,
ymax=1,
yminorticks=true,
ylabel={$\mathrm{FER}$ $\longrightarrow$},
ymajorgrids,
yminorgrids,
axis background/.style={fill=white},
title style={font=\bfseries},
legend style={at={(0.02,0.02)},anchor=south west, legend cell align=left,align=left,draw=white!15!black}
]

\addplot [color=red,solid,mark=asterisk,mark options={solid}]
  table[row sep=crcr]{%
-8	1\\
-7	1\\
-6	1\\
-5	1\\
-4	1\\
-3	1\\
-2	1\\
-1	1\\
0	0.999993204704913\\
1	0.999959507488183\\
2	0.999787484127634\\
3	0.998882872105006\\
4	0.995770440782866\\
5	0.986333544545998\\
6	0.962115671374689\\
7	0.910870954832977\\
8	0.821469911690021\\
9	0.692042160244787\\
10	0.535807155855306\\
11	0.378008488906029\\
12	0.242448123274472\\
13	0.141674455657302\\
14	0.0758438709277449\\
15	0.0375125422064621\\
16	0.0172893716837447\\
17	0.0075582298610027\\
18	0.00326304484885449\\
19	0.00141733099984762\\
20	0.000559262094252232\\
21	0.00021800796017622\\
22	8.0891937400997e-05\\
23	3.36041304968468e-05\\
24	1.08910893854047e-05\\
};
\addlegendentry{\scriptsize ZF BMD};

\addplot [color=red,solid,mark=+,mark options={solid}]
  table[row sep=crcr]{%
-8	0.999945451466839\\
-7	0.999922924598196\\
-6	0.999854692388627\\
-5	0.999734611146686\\
-4	0.999466150447647\\
-3	0.998862486219747\\
-2	0.997505102754636\\
-1	0.994723220649654\\
0	0.989403435459172\\
1	0.979298086975495\\
2	0.961564693955269\\
3	0.931341083134664\\
4	0.8824616878005\\
5	0.810670512245168\\
6	0.712622595733839\\
7	0.591159674819997\\
8	0.458777087046892\\
9	0.328781137228936\\
10	0.216448021065787\\
11	0.130650439129655\\
12	0.0721385733808557\\
13	0.0369930279341549\\
14	0.0176026068613677\\
15	0.00792256938112641\\
16	0.00331582474356837\\
17	0.00140755694527097\\
18	0.000558517404379726\\
19	0.00024407210571394\\
20	9.49479587445534e-05\\
21	3.84446146681378e-05\\
22	1.28459003007337e-05\\
23	4.46813923503781e-06\\
24	1.30320727688603e-06\\
};
\addlegendentry{\scriptsize ZF GMD};

\addplot [color=red,solid,mark=o,mark options={solid}]
  table[row sep=crcr]{%
-8	0.999836261314283\\
-7	0.999794093250252\\
-6	0.999716924762214\\
-5	0.999558585078072\\
-4	0.999163620186941\\
-3	0.998284420706214\\
-2	0.996286231605811\\
-1	0.992039451435376\\
0	0.983773300764322\\
1	0.967844384391412\\
2	0.939943182024452\\
3	0.893005007015444\\
4	0.820188858936842\\
5	0.719758121005961\\
6	0.594127729393105\\
7	0.455495704209927\\
8	0.322521925764287\\
9	0.208725773260365\\
10	0.123973642818997\\
11	0.0674124921912285\\
12	0.0341027934527239\\
13	0.0160320559202519\\
14	0.00704523162507993\\
15	0.00295399855176437\\
16	0.00121998818363345\\
17	0.000505551337197715\\
18	0.000185986295658449\\
19	6.87907269727696e-05\\
20	2.1037488898303e-05\\
21	1.07049169172781e-05\\
22	4.46813923503781e-06\\
23	3.16493195815178e-06\\
24	1.30320727688603e-06\\
};
\addlegendentry{\scriptsize ZF MT GMD};

\addplot [color=blue,solid,mark=asterisk,mark options={solid}]
  table[row sep=crcr]{%
-8	0.999998696792723\\
-7	1\\
-6	0.999998696792723\\
-5	1\\
-4	0.999999348396362\\
-3	0.999994787170892\\
-2	0.999985664719954\\
-1	0.999947127019052\\
0	0.999791673008166\\
1	0.999392239977801\\
2	0.997962993939993\\
3	0.993796361017086\\
4	0.983093398910724\\
5	0.958634525479424\\
6	0.909065919668256\\
7	0.824086938074476\\
8	0.699032191733067\\
9	0.544746878516042\\
10	0.384004545586982\\
11	0.241215475363006\\
12	0.133708042659932\\
13	0.0659352136566442\\
14	0.0281691045485937\\
15	0.0106762463571866\\
16	0.00367253119249889\\
17	0.00100235256839348\\
18	0.000290708308979647\\
19	6.86976407387063e-05\\
20	8.2846748316326e-06\\
21	2.04789714939233e-06\\
22	7.44689872506301e-07\\
23	0\\
24	0\\
};
\addlegendentry{\scriptsize MMSE BMD};

\addplot [color=blue,solid,mark=+,mark options={solid}]
  table[row sep=crcr]{%
-8	0.999775010572269\\
-7	0.999653905381753\\
-6	0.999402479463548\\
-5	0.998620275838714\\
-4	0.996559346616553\\
-3	0.992345518972976\\
-2	0.984000245002968\\
-1	0.971953396935434\\
0	0.9539215694488\\
1	0.928043968724142\\
2	0.889420163055433\\
3	0.836357844707401\\
4	0.763089856048655\\
5	0.668938063864045\\
6	0.55525752446283\\
7	0.430581918352761\\
8	0.308125208804059\\
9	0.20061377339292\\
10	0.118367803631238\\
11	0.0617549901435641\\
12	0.0286981136167753\\
13	0.0119091735273548\\
14	0.00447688934103976\\
15	0.00136846072696439\\
16	0.000409858688580656\\
17	7.56791082934529e-05\\
18	2.04789714939233e-05\\
19	3.63036312846822e-06\\
20	0\\
21	0\\
22	0\\
23	0\\
24	0\\
};
\addlegendentry{\scriptsize MMSE GMD};

\addplot [color=blue,solid,mark=o,mark options={solid}]
  table[row sep=crcr]{%
-8	0.996907861476886\\
-7	0.995343454227218\\
-6	0.993120741130255\\
-5	0.989532359893349\\
-4	0.983995497605031\\
-3	0.975414994721545\\
-2	0.961550451761457\\
-1	0.941481897473519\\
0	0.910768094544337\\
1	0.866162474388486\\
2	0.802453231846997\\
3	0.718026437793681\\
4	0.613345847846641\\
5	0.492086506526695\\
6	0.365273547396299\\
7	0.247505498371131\\
8	0.151314372970662\\
9	0.0822842282038816\\
10	0.0400717620395641\\
11	0.0169707375045461\\
12	0.00629840076919017\\
13	0.00203868161222006\\
14	0.000587932654343725\\
15	0.000145679956309045\\
16	3.55589414121759e-05\\
17	1.11703480875945e-06\\
18	9.30862340632877e-08\\
19	0\\
20	0\\
21	0\\
22	0\\
23	0\\
24	0\\
};
\addlegendentry{\scriptsize MMSE MT GMD};

\addplot [color=black,solid,mark=asterisk,mark options={solid}]
  table[row sep=crcr]{%
-8	0.903494120341061\\
-7	0.844508950820997\\
-6	0.761462304179744\\
-5	0.653563490887539\\
-4	0.528101132827759\\
-3	0.398305472864667\\
-2	0.279623917348025\\
-1	0.182431067228815\\
0	0.111659639811959\\
1	0.0643772141327335\\
2	0.0348603833564569\\
3	0.0176551408289444\\
4	0.0079715513409394\\
5	0.0033553792481041\\
6	0.0012042188068266\\
7	0.000359649240964991\\
8	0.000102372068589286\\
9	2.69400180498121e-05\\
10	1.3470009024906e-06\\
11	0\\
12	0\\
13	0\\
14	0\\
15	0\\
16	0\\
17	0\\
18	0\\
19	0\\
20	0\\
21	0\\
22	0\\
23	0\\
24	0\\
25	0\\
26	0\\
27	0\\
};
\addlegendentry{\scriptsize ML};

\end{axis}
\end{tikzpicture}%
\caption{Comparison of ML decoding and alternative decoders (BMD, GMD, MT GMD) based on LRA equalization in the case $q=13$, $\Ntx=\Nrx=\Ntime=4$, and $\CGab[4,1]$. ZF and MMSE indicates that the ZF or the MMSE criterion, respectively, was used for both factorization and equalization.
I.i.d.\ unit-variance complex-Gaussian channel matrix $\H$, additive i.i.d.\ complex-Gaussian noise matrix $\N$.}
\label{fig:simulation_MMSE_pure}
\end{figure}
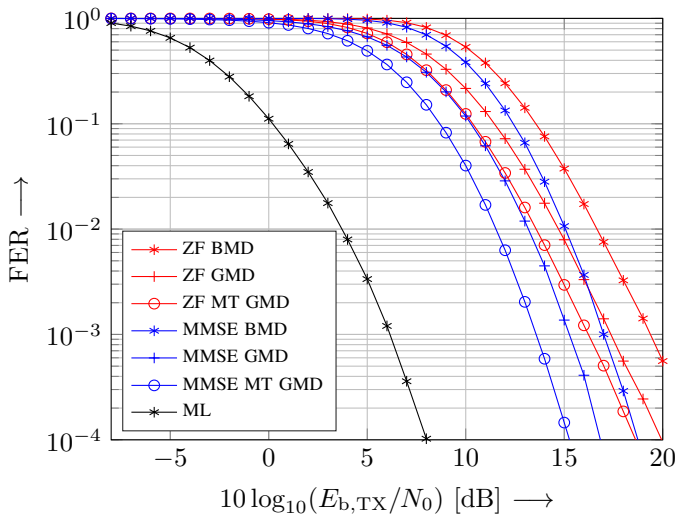

It can be seen that all alternative decoding methods are suboptimal compared to the ML case.
The best of the alternatives, multi-trial GMD with MMSE factorization and equalization, is approximately $7$ dB worse than ML decoding at FER $10^{-3}$.
This effect can be expected due to the following reasons.
\begin{itemize}
\item The row norms of $\Fred$ do not provide actual soft information. They merely describe a statistical tendency of the errors in $\Fred \Y$.
\item GMD decoding of Gabidulin codes cannot fully utilize soft information. To our knowledge, there is no soft-information decoding algorithm for Gabidulin codes, yet.
\item The LLL algorithm only finds an approximate solution to the shortest basis problem.
\end{itemize}

However, all alternative decoding methods share the advantage that their decoding has polynomial decoding complexity in the parameters $\Ntx$, $\Nrx$, and $\Ntime$ of the code.
It can therefore be used for larger parameter sets.

\section{Conclusion}\label{sec:Conclusion}

\noindent
We have presented a new class of space-time codes based on finite-field rank-metric codes and Eisenstein integers.
These codes achieve maximum diversity order and improve upon existing ST codes based on Gaussian integers.
We have also shown how to decode the new code class in polynomial time using a channel transformation based on lattice-reduction-aided equalization.

In future work, the problems causing the sub-optimality of the alternative decoder, as discussed in Section~\ref{subsec:alt_dec_numerical_results}, should be solved in order to reduce the gap to ML decoding.
Alternatively, a modification of the code construction using concatenation with Hamming-error-correcting codes in the rows can be considered, which could shift all curves (including ML decoding) to lower SNR values since the probability of a row being in error decreases.
However, if long Hamming-error correcting codes of large dimension are used, ML decoding becomes impractical due to the large number of codewords, resulting in an advantage for our alternative decoding method.

\appendix

\subsection{Technical Proofs}
\label{app:technical_proofs}

We choose $\Eprime$ and $\varphi$ as in Section~\ref{sec:new_ST_construcion}.

\begin{lemma}\label{lem:mod_det}
Let $\A \in (\Eprime \EI)^{m \times n}$, $\B \in \EI^{m \times n}$.
Then,
\begin{align}
\Emod{\Eprime \EI}{\det (\A + \B)} = \Emod{\Eprime \EI}{\det(\B)}.
\end{align}
\end{lemma}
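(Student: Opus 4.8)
The plan is to use the multilinearity of the determinant in its columns (or rows), expanding $\det(\A+\B)$ as a sum of $2^n$ determinants, each of whose columns is taken from either $\A$ or $\B$. The key observation is that every summand other than $\det(\B)$ contains at least one column coming from $\A$, hence a column all of whose entries lie in $\Eprime\EI$. Since $\Eprime\EI$ is an ideal of $\EI$ and the determinant of such a matrix is an $\EI$-linear combination of entries of that distinguished column, each such summand lies in $\Eprime\EI$. Therefore $\det(\A+\B) - \det(\B) \in \Eprime\EI$, which is exactly the statement that $\Emod{\Eprime\EI}{\det(\A+\B)} = \Emod{\Eprime\EI}{\det(\B)}$, because $\Emod{\Eprime\EI}{\cdot}$ is the canonical representative of a residue class modulo the sublattice $\Eprime\EI$ and $\Emod{\Eprime\EI}{z_1} = \Emod{\Eprime\EI}{z_2}$ whenever $z_1 - z_2 \in \Eprime\EI$.

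First I would reduce to the square case: if $n \neq m$ then $\det$ is not even defined, so implicitly $n = m$ here (the lemma as used in the proof of Theorem \ref{thm:Phi_rank_preserving} is applied with square matrices of size $m \times m$), and I would simply note this. Second, I would write $\A = [\a_1 \mid \dots \mid \a_m]$ and $\B = [\b_1 \mid \dots \mid \b_m]$ in terms of columns, so that $\A + \B = [\a_1 + \b_1 \mid \dots \mid \a_m + \b_m]$, and apply column-multilinearity of $\det$ over the commutative ring $\EI$ to get
\begin{align}
\det(\A+\B) = \sum_{S \subseteq \{1,\dots,m\}} \det\bigl( \ve{v}_1^{(S)} \mid \dots \mid \ve{v}_m^{(S)} \bigr),
\end{align}
where $\ve{v}_j^{(S)} = \a_j$ if $j \in S$ and $\ve{v}_j^{(S)} = \b_j$ if $j \notin S$. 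Third, for each $S \neq \emptyset$, pick any $j_0 \in S$ and expand the corresponding determinant along column $j_0$ (Laplace expansion), which expresses it as $\sum_i (-1)^{i+j_0} (\a_{j_0})_i \cdot M_i$ with cofactors $M_i \in \EI$; since every entry $(\a_{j_0})_i \in \Eprime\EI$ and $\Eprime\EI$ is an ideal, the whole sum lies in $\Eprime\EI$. Fourth, the only surviving term modulo $\Eprime\EI$ is $S = \emptyset$, giving $\det(\B)$, so $\det(\A+\B) \equiv \det(\B) \pmod{\Eprime\EI}$.

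The main obstacle is essentially bookkeeping rather than depth: making sure the multilinear expansion and the "ideal absorbs a column" argument are stated cleanly over the ring $\EI$ (which is fine, as $\EI$ is a commutative ring and determinants, Laplace expansion, and multilinearity all hold over any commutative ring), and making the final passage from "$\det(\A+\B) - \det(\B) \in \Eprime\EI$" to the equality of $\Emod{\Eprime\EI}{\cdot}$ values explicit — this uses only that $\Emod{\Eprime\EI}{z} = z - \Equant{\Eprime\EI}{z}$ is a well-defined function of the coset $z + \Eprime\EI$, equivalently that $\Emod{\Eprime\EI}{z} = \Emod{\Eprime\EI}{z'}$ iff $z - z' \in \Eprime\EI$. (Strictly, $\Equant{\Eprime\EI}{\cdot}$ may be ambiguous on Voronoi-cell boundaries, but any fixed tie-breaking rule makes $\Emod{\Eprime\EI}{\cdot}$ constant on cosets, which is all we need.) No genuine difficulty is expected here.
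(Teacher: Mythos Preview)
Your argument is correct, but it differs in presentation from the paper's. The paper observes directly that $\Emod{\Eprime\EI}{\cdot}$ is compatible with addition and multiplication in $\EI$ (equations \eqref{eq:mod_plus_invariant}--\eqref{eq:mod_mul_invariant}); since the Leibniz formula expresses $\det$ as a polynomial in the matrix entries, one gets $\Emod{\Eprime\EI}{\det(\A+\B)} = \Emod{\Eprime\EI}{\det(\Emod{\Eprime\EI}{\A+\B})}$, and then simply uses $\Emod{\Eprime\EI}{\A}=\ve 0$. Your route via column-multilinearity, the $2^m$-term expansion, and Laplace expansion along a column in $\Eprime\EI$ reaches the same conclusion but is more explicit and slightly longer. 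The paper's formulation has the advantage that the intermediate identities \eqref{eq:mod_plus_invariant}--\eqref{eq:mod_mul_invariant} are reused verbatim in the proof of Lemma~\ref{lem:det_invariant_under_phi}; your approach, by contrast, makes the role of $\Eprime\EI$ as an ideal of $\EI$ fully transparent and avoids any appeal to ``$\Emod{\Eprime\EI}{\cdot}$ is a ring homomorphism'' as a black box. Either way, the content is the same: $\det(\A+\B)-\det(\B)\in\Eprime\EI$ because $\det$ is polynomial and $\Eprime\EI$ is an ideal.
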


\begin{proof}
For $a,b \in \Eprime$ it holds that
\begin{align}
\Emod{\Eprime \EI}{a+b} &= \Emod{\Eprime \EI}{\Emod{\Eprime \EI}{a}+\Emod{\Eprime \EI}{b}}, \label{eq:mod_plus_invariant}\\
\Emod{\Eprime \EI}{a \cdot b} &= \Emod{\Eprime \EI}{\Emod{\Eprime \EI}{a} \cdot \Emod{\Eprime \EI}{b}}. \label{eq:mod_mul_invariant}
\end{align}
The determinant is a finite sum of finitely many multiplications of matrix elements, so this relation extends to $\det$ as follows:
\begin{align}
&\Emod{\Eprime \EI}{\det(\A+\B)} \\
&= \Emod{\Eprime \EI}{\det(\Emod{\Eprime \EI}{\B})} = \Emod{\Eprime \EI}{\det(\B)},
\end{align}
which proves the claim (note that $\Emod{\Eprime \EI}{\A}={\ve 0}$).
\end{proof}

\begin{lemma}\label{lem:det_invariant_under_phi}
For any $\A \in \Fq^{m \times n}$, 
\begin{align}
\varphi(\det (\A)) = \Emod{\Eprime \EI}{\det(\Phi(\A))}.
\end{align}
\end{lemma}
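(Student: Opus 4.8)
The plan is to unwind the definition of $\Phi$ entry-wise and then use that $\varphi$ is a ring isomorphism between $(\Fq,+,\cdot)$ and $(\EC{\Eprime},\Eadd,\Emul)$, together with the fact that the latter operations are just the ordinary operations in $\CC$ followed by reduction $\Emod{\Eprime\EI}{\cdot}$. Concretely, I would first recall that by definition $\Phi(\A) = [\varphi(a_{ij})]_{i,j}$, so each entry $\varphi(a_{ij})$ lies in $\EC{\Eprime}\subseteq\EI$; in particular $\Phi(\A)\in\EI^{m\times n}$ and $\det(\Phi(\A))$ is a well-defined Eisenstein integer, so the right-hand side makes sense.

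The core computation is the chain
\begin{align}
\varphi(\det(\A))
&= \varphi\Bigl(\sum_{\pi\in S_m}\operatorname{sgn}(\pi)\prod_{i}a_{i,\pi(i)}\Bigr) \\
&= \bigoplus_{\pi\in S_m}\operatorname{sgn}(\pi)\cdot\bigotimes_{i}\varphi(a_{i,\pi(i)}),
\end{align}
where I use that $\varphi$ is a homomorphism for both $+$ and $\cdot$, and that multiplication by the integer $\operatorname{sgn}(\pi)\in\{\pm1\}$ is just repeated $\Eadd$ (or negation), which $\varphi$ also respects since $\varphi(1)$ is the multiplicative identity of $\EC{\Eprime}$. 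Now I replace each $\Eadd$ by $\Emod{\Eprime\EI}{\cdot}$ of an ordinary sum and each $\Emul$ by $\Emod{\Eprime\EI}{\cdot}$ of an ordinary product, and invoke the reduction identities \eqref{eq:mod_plus_invariant} and \eqref{eq:mod_mul_invariant} (valid on all of $\EI$, since $\Emod{\Eprime\EI}{\cdot}$ is a ring homomorphism $\EI\to\EI/\Eprime\EI$) to push all the $\Emod{\Eprime\EI}{\cdot}$'s to the outside. This yields
\begin{align}
\varphi(\det(\A))
= \Emod{\Eprime\EI}{\sum_{\pi\in S_m}\operatorname{sgn}(\pi)\prod_{i}\varphi(a_{i,\pi(i)})}
= \Emod{\Eprime\EI}{\det(\Phi(\A))},
\end{align}
which is exactly the claim.

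I do not expect a genuine obstacle here; the only point requiring mild care is the bookkeeping that the Leibniz formula for the determinant is built solely from $+$, $\cdot$, and sign flips, so that a ring homomorphism commutes with it — this is the same observation already used in the proof of Lemma~\ref{lem:mod_det}. If one prefers to avoid writing out the permutation expansion, an alternative is to argue purely via the ring homomorphism $\Emod{\Eprime\EI}{\cdot}\colon \EI\to\EI/\Eprime\EI$: it induces a homomorphism on matrix rings and hence commutes with $\det$, so $\Emod{\Eprime\EI}{\det(\Phi(\A))}$ depends only on $\Phi(\A)\bmod\Eprime\EI = [\varphi(a_{ij})\bmod\Eprime\EI]_{i,j}$, which corresponds under the isomorphism $\EC{\Eprime}\cong\EI/\Eprime\EI$ to $\A$ itself, giving $\varphi(\det(\A))$. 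Either route is a short routine verification.
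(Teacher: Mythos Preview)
Your proposal is correct and follows essentially the same route as the paper: the paper's proof also uses that $\varphi$ is an isomorphism to identify $\varphi(\det(\A))$ with the determinant of $\Phi(\A)$ taken under the operations $\Eadd,\Emul$, and then invokes the reduction identities \eqref{eq:mod_plus_invariant} and \eqref{eq:mod_mul_invariant} to collapse this to $\Emod{\Eprime\EI}{\det(\Phi(\A))}$. Your explicit Leibniz expansion simply spells out what the paper abbreviates as $\det_{\oplus,\otimes}(\Phi(\A))$.
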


\begin{proof}
Since $\varphi : \Fq \to (\EI,\oplus,\otimes)$ is an isomorphism, $\varphi(\det (\A)) = \det_{\oplus,\otimes}(\Phi(\A))$, where $\det_{\oplus,\otimes}$ is the determinant under modulo addition $\oplus$ and multiplication $\otimes$.
We obtain
$\Emod{\Eprime \EI}{\det(\Phi(\A))} = \det\nolimits_{\oplus,\otimes}(\Phi(\A)) = \varphi(\det(\A))$,
where the first equality follows by \eqref{eq:mod_plus_invariant} and \eqref{eq:mod_mul_invariant}.
\end{proof}

\bibliographystyle{IEEEtran}
\bibliography{main}

\begin{thebibliography}{10}
\providecommand{\url}[1]{#1}
\csname url@samestyle\endcsname
\providecommand{\newblock}{\relax}
\providecommand{\bibinfo}[2]{#2}
\providecommand{\BIBentrySTDinterwordspacing}{\spaceskip=0pt\relax}
\providecommand{\BIBentryALTinterwordstretchfactor}{4}
\providecommand{\BIBentryALTinterwordspacing}{\spaceskip=\fontdimen2\font plus
\BIBentryALTinterwordstretchfactor\fontdimen3\font minus
  \fontdimen4\font\relax}
\providecommand{\BIBforeignlanguage}[2]{{%
\expandafter\ifx\csname l@#1\endcsname\relax
\typeout{** WARNING: IEEEtran.bst: No hyphenation pattern has been}%
\typeout{** loaded for the language `#1'. Using the pattern for}%
\typeout{** the default language instead.}%
\else
\language=\csname l@#1\endcsname
\fi
#2}}
\providecommand{\BIBdecl}{\relax}
\BIBdecl

\bibitem{tarokh1998space}
V.~Tarokh, N.~Seshadri, and A.~R. Calderbank, ``{Space-Time Codes for High Data
  Rate Wireless Communication: Performance Criterion and Code Construction},''
  \emph{IEEE Trans. Inf. Theory}, vol.~44, no.~2, pp. 744--765, 1998.

\bibitem{gabidulin2000space}
E.~M. Gabidulin, M.~Bossert, and P.~Lusina, ``{Space-Time Codes Based on Rank
  Codes},'' in \emph{IEEE Int. Symp. Inf. Theory}, 2000.

\bibitem{lusina2003maximum}
P.~Lusina, E.~Gabidulin, and M.~Bossert, ``{Maximum Rank Distance Codes as
  Space-Time Codes},'' \emph{IEEE Trans. Inf. Theory}, vol.~49, no.~10, pp.
  2757--2760, 2003.

\bibitem{bossert2002space}
M.~Bossert, E.~M. Gabidulin, and P.~Lusina, ``{Space-Time Codes Based on
  Gaussian Integers},'' in \emph{IEEE Int. Symp. Inf. Theory}, 2002.

\bibitem{liu2002rank}
Y.~Liu, M.~P. Fitz, and O.~Y. Takeshita, ``{A Rank Criterion for QAM Space-Time
  Codes},'' \emph{IEEE Trans. Inf. Theory}, vol.~48, no.~12, pp. 3062--3079,
  2002.

\bibitem{conway2013sphere}
J.~H. Conway and N.~J.~A. Sloane, \emph{Sphere packings, lattices and groups},
  3rd~ed.\hskip 1em plus 0.5em minus 0.4em\relax Springer Science \& Business
  Media, 1999.

\bibitem{fischer2005precoding}
R.~F.~H. Fischer, \emph{{Precoding and Signal Shaping for Digital
  Transmission}}.\hskip 1em plus 0.5em minus 0.4em\relax John Wiley \& Sons,
  2005.

\bibitem{stern2015lattice}
S.~Stern and R.~F.~H. Fischer, ``{Lattice-Reduction-Aided Preequalization over
  Algebraic Signal Constellations},'' in \emph{Int. Conference on Signal
  Processing and Communication Systems}, 2015.

\bibitem{Delsarte_1978}
P.~Delsarte, ``{Bilinear Forms over a Finite Field with Applications to Coding
  Theory},'' \emph{Journal of Combinatorial Theory, Series A}, vol.~25, no.~3,
  pp. 226--241, 1978.

\bibitem{Gabidulin_TheoryOfCodes_1985}
E.~M. Gabidulin, ``{Theory of Codes with Maximum Rank Distance},''
  \emph{Problemy Peredachi Informatsii}, vol.~21, no.~1, pp. 3--16, 1985.

\bibitem{Roth_RankCodes_1991}
R.~M. Roth, ``{Maximum-Rank Array Codes and their Application to Crisscross
  Error Correction},'' \emph{IEEE Trans. Inform. Theory}, vol.~37, no.~2, pp.
  328--336, 1991.

\bibitem{koetter2008coding}
R.~Koetter and F.~R. Kschischang, ``{Coding for Errors and Erasures in Random
  Network Coding},'' \emph{IEEE Trans. Inf. Theory}, vol.~54, no.~8, pp.
  3579--3591, 2008.

\bibitem{gabidulin1991ideals}
E.~M. Gabidulin, A.~Paramonov, and O.~Tretjakov, ``Ideals over a
  non-commutative ring and their application in cryptology,'' in \emph{Advances
  in Cryptology—EUROCRYPT’91}.\hskip 1em plus 0.5em minus 0.4em\relax
  Springer, 1991.

\bibitem{silva2008rank}
D.~Silva, F.~R. Kschischang, and R.~Koetter, ``{A Rank-Metric Approach to Error
  Control in Random Network Coding},'' \emph{IEEE Trans. Inf. Theory}, vol.~54,
  no.~9, pp. 3951--3967, 2008.

\bibitem{puchinger2016subquadratic}
S.~Puchinger and A.~{Wachter-Zeh}, ``{Sub-Quadratic Decoding of Gabidulin
  Codes},'' \emph{IEEE Int. Symp. Inf. Theory}, 2016.

\bibitem{robert2015new}
G.~Robert, ``{A New Constellation for Space-Time Coding},'' in \emph{Int.
  Workshop on Coding and Cryptography}, 2015.

\bibitem{fischer2016factorization}
R.~F.~H. Fischer, M.~Cyran, and S.~Stern, ``{Factorization Approaches in
  Lattice-Reduction-Aided and Integer-Forcing Equalization},'' in \emph{Int.
  Zurich Seminar on Communications}, 2016.

\bibitem{windpassinger2003low}
C.~Windpassinger and R.~F. Fischer, ``{Low-Complexity Near-Maximum-Likelihood
  Detection and Precoding for MIMO Systems Using Lattice Reduction},'' in
  \emph{IEEE Information Theory Workshop}, 2003.

\bibitem{stern2016advanced}
S.~Stern and R.~F.~H. Fischer, ``{Advanced Factorization Strategies for
  Lattice-Reduction-Aided Preequalization},'' \emph{IEEE Int. Symp. Inf.
  Theory}, 2016.

\bibitem{bossert1999channel}
M.~Bossert, \emph{{Channel Coding for Telecommunications}}.\hskip 1em plus
  0.5em minus 0.4em\relax John Wiley \& Sons, Inc., 1999.

\end{thebibliography}

\end{document}